\newtheorem{definition}{Definition}
\newtheorem{lemma}{Lemma}
\newenvironment{proof}[1][Proof]{\begin{trivlist}
\item[\hskip \labelsep {\bfseries #1}]}{\end{trivlist}}
\newcommand{\C}{\mathcal{C}}
\newcommand{\ubalg}{\textsc{ub}~}
\newcommand{\mproalg}{\textsc{mp}~}
\newcommand{\probabalg}{\textsc{pr}~}
\newcommand{\Var}{\operatorname{Var}}
\newcommand{\E}{\operatorname{E}}
\newcommand{\Bin}{\operatorname{Bin}}
\renewcommand{\Pr}{\operatorname{Pr}}
\newcommand{\topk}{T}
\newcommand{\atopk}{\tilde{\topk}}
\newcommand{\mbf}{\mathbf}
\newcommand{\X}{\mbf{X}}
\newcommand{\w}{\mbf{w}}
\newcommand{\x}{\mbf{x}}
\newcommand{\acc}{\operatorname{acc}}
\newcommand{\cost}{\operatorname{cost}}
\newcommand{\dist}{\operatorname{dist}}
\begin{document}

\title{Approximate Top-$k$ Retrieval from Hidden Relations}
\author{
  Antti Ukkonen\\ \\
  \small{Aalto University and}\\\small{Helsinki Institute for
    Information Technology}
   \thanks{This work was conducted while the author was
  visiting Yahoo! Research Barcelona. The work was supported by
  Academy of Finland grant number 131276.}
   \\\small{\texttt{antti.ukkonen@hiit.fi}}
  }
\date{}

\maketitle

\begin{abstract}
We consider the evaluation of approximate top-k queries from relations
with a-priori unknown values.  Such relations can arise for example in
the context of expensive predicates, or cloud-based data sources.  The
task is to find an approximate top-$k$ set that is close to the exact
one while keeping the total processing cost low.  The cost of a query
is the sum of the costs of the entries that are read from the hidden
relation.

A novel aspect of this work is that we consider prior information
about the values in the hidden matrix.  We propose an algorithm that
uses regression models at query time to assess whether a row of the
matrix can enter the top-k set given that only a subset of its values
are known.  The regression models are trained with existing data that
follows the same distribution as the relation subjected to the query.

To evaluate the algorithm and to compare it with a method proposed
previously in literature, we conduct experiments using data from a
context sensitive Wikipedia search engine. The results indicate that
the proposed method outperforms the baseline algorithms in terms of
the cost while maintaining a high accuracy of the returned results.

\end{abstract}

\section{Introduction}

Databases are
traditionally concerned with the task of
efficiently retrieving a set of tuples that
match a given query.
However, we can also
rank the result set 
according to some scoring function.
Often especially the top-scoring tuples
of this ranking
are of interest.
A typical example of this are information retrieval systems,
as the user of a search engine is unlikely to be interested in
the entire ranking of results,
but the first few items only.
The idea of top-$k$ processing
(see e.g.~\cite{fagin2001,hristides01sigmod,natsev01vldb,theobald04topk,marian04tods,das06answering,kumar09topk})
is to retrieve the $k$ first results of this ranking
without computing the score for every matching document.
%% and thus avoid unnecessary computation.
%% The computational problems related to
%% top-$k$ processing
%% are typically concerned with minimizing the
%% number of sequential or random accesses in the database
%% to retrieve the topmost results,
%% or other techniques to
%% prune the set of ranked items.
Most of existing work on top-$k$ search
focuses on the task of retrieving
tuples with the highest score according to
a given scoring function $f$
from a known relation.
In general $f$ is
assumed to be monotonic \cite{fagin2001},
and usually it is a
convex combination of the attribute values.
Moreover, since the relation is known,
various indexing techniques can be applied
to speed up the processing.
For a more in-depth discussion,
see \cite{topksurvey} for
an excellent survey on the topic.
%% The basic assumption is to have a fixed database,
%% while the scoring function may vary as a part of the query.
%% A simple example of this traditional setting is a
%% system that allows users to
%% search a database of products
%% % of some particular type, e.g. used cars,
%% according to their personal preferences.
%% These preferences are modeled by the scoring function.

In this paper we consider the top-$k$ problem
in a somewhat different scenario
that can be explained as follows.
Suppose we have two people, $A$ and $B$.
The person $A$ has a
vector of $m$ elements,
while the person $B$ has a
matrix of $n$ rows and $m$ columns.
The task of person $A$ is to find
those $k$ rows of this matrix
that have the highest
inner product with her vector.
However, $B$ will not reveal the matrix to $A$.
Instead, $A$ has to ask for the values of
individual cells one by one.
Moreover, to each cell is associated a cost that
$A$ has to pay before $B$ reveals the value.
These costs are known to $A$.
How should $A$ proceed in order to
find the $k$ highest scoring rows
while keeping the total cost of the process low?
%% A similar problem has
%% been considered
%% in the context of web-accessible databases \cite{marian04tods}.
We also give $A$ some additional information
in the form of examples of matrices
that $B$ has had in the past.
Using these as training data $A$ can
employ machine learning techniques to
find out what elements of the matrix to ask for.

%% We turn this setting around,
%% and consider a problem where
%% the scoring function $f$ is fixed,
%% and the relation that we are searching
%% is given as a part of the query.
In other words,
we must apply a known linear ranking function
(the vector of person $A$)
to a hidden relation
(the matrix of person $B$)
given knowledge
about the distribution
of the values in
the hidden relation.
This is in contrast to
most of existing work
where the relation is assumed to be known,
and the ranking function may vary.
%% In our setting is it thus not possible
%% to use a pre-built index of the relation.
We also assume that
the access costs are high:
reading the value of an entry in the matrix
is computationally expensive.
In this paper we
propose an algorithm
that will find an approximate answer
to a top-$k$ query
while keeping the cost of the query low.

Since the contents
of the hidden relation
%% that we are querying
are unknown at the time the query is issued,
a solution can not rely on
pre-built index structures.
We do assume, however,
that all relations that we will encounter
follow the same distribution,
and that we can sample
training data from this distribution.
The algorithm that we propose
makes use of regression models
to estimate whether or not
a row can belong to the top-$k$ set
after having observed only
a subset of its entries.
Moreover,
we can decrease the cost of the query
by allowing a small number of errors
in the results.
That is, we allow the algorithm to
return a set of documents that is not the
exact top-$k$ set.
The results may miss some high scoring documents,
and respectively contain other documents that
do not belong to the exact top-$k$ set.

In practice
this top-$k$ problem can be motivated
by the following design of a
context sensitive search engine \cite{ukkonen08cikm},
where a query is assumed to consist of
a set of terms and a \textit{context document}.
The context document can be
e.g.~the page currently viewed by the user.
To process the query we first
retrieve all documents
that contain the query terms,
and then describe each of these by a feature vector
that is a function of the context.
The final score of a document is given
by the inner product of the feature vector
with a scoring vector.
A context dependant feature
could be e.g.~a measure of textual similarity
between the context document
and the document that is being ranked.
These context dependant features could be implemented as expensive predicates.
%% Indexing the feature vectors is not feasible,
%% as the value of a feature
%% will depend on the context
%% of which there can be hundreds of thousands.
We can thus think that
person $B$ is hiding the feature vectors,
and by computing a feature we are asking for its value.
The total cost we have to pay
reflects the computational overhead
associated with finding out the values of the features.
%% However, the values of the features can not be precomputed
%% because they may depend on the query terms, the user,
%% the context in which the user is issuing the query,
%% and possibly a number of other factors.
%% Hence they have to be computed at query time.
%% Depending on the complexity of the features,
%% it may be very slow to compute the feature vectors
%% fully for each document.
%% It makes thus sense to employ a technique that
%% allows us to discard some of the documents
%% before all of their features have been computed.
%% Notice that this makes it impossible to
%% use the threshold algorithm~\cite{fagin2001},
%% which requires us to sort the items
%% in decreasing order of their values for every attribute.

%% One consequence of our setting is that
%% we cannot store $R$ as $m$ lists,
%% one for each $A_j$,
%% where identifiers of the tuples are sorted in
%% decreasing order of their values for $A_j$.
%% This is a common basic assumption in top-$k$ processing,
%% where one tries to minimize access costs,
%% either sequential or random.
%% However, we can access the values
%% in constant time \textit{once they have been computed},
%% but computing the values is slow.
%% Of course this is identical to the scenario
%% where one only has expensive random access to $R$
%% and no sequential access at all.

\subsection{Related Work}

%% The original motivation for this work is the
%% approach to context sensitive search proposed in~\cite{ukkonen08cikm}.
%% Each document is described by a number of features
%% that are computed at query time,
%% %% The ranking is based on a linear combination of the features
%% %% that is learned using a machine learning algorithm.
%% %% A practical difficulty with this approach is that
%% %% computing some of the features
%% which can be very slow.
%% %% However, by only looking for the top-$k$ results,
%% %% we can avoid computing the value of
%% %% every feature for every document.
%% One aim of this paper is to study if
%% the approach of \cite{ukkonen08cikm}
%% can be made more efficient by
%% an appropriate top-$k$ algorithm.

We discuss related work from a number of different angles:
databases, 
approximate nearest neighbor search, and
learning theory.

\subsubsection{Databases}

A considerable amount of literature has been published about
top-$k$ processing in the past years.
For a thorough review we refer the reader
to the survey by Ilyas et.~al.~\cite{topksurvey}.
Usually the basic setting in these is slightly different
than the one taken in this paper.
A common assumption is that the data is fixed,
and various preprocessing methods can be applied.
%to enhance the processing of a top-$k$ query.
Most well known of this line of research is bound to be
the work by Fagin and others related to the
threshold algorithm~\cite{fagin2001,gunzer00optimizing,nepal99query},
and its variants, see e.g.~\cite{theobald04topk,das06answering,kumar09topk}.
The idea is to sort each column of the relation
in decreasing order of its values as a preprocessing step.
As a consequence tuples that have a high total score
should appear sooner in the sorted lists.
%% By applying a criteria based on
%% upper and lower bounds on the scores,
%% the algorithm will stop after it is certain that
%% no items in the remaining parts of the lists can enter the
%% current top-$k$ set.
We cannot make use of this approach,
as it requires reading all values of the input relation
in order to do the sorting,
or alternatively data sources that
directly provide the columns in sorted order,
neither of which we do not have at our disposal.

Considerably more related to this paper
is the work of Marian et.~al.~\cite{marian04tods},
where the problem of aggregating several
web-based data sources is considered.
They too
assume that probing values from the relation(s)
is time-consuming and therefore
the algorithm should aim to minimize
the total execution time of the query.
Another important reference to the current work
is the MPro algorithm discussed by Hwang and Chang \cite{mpro},
to which the algorithm in \cite{marian04tods}
is closely related.
%% by having (potentially very high) access costs
%% on some attributes of the relation.
%% The aim is to minimize the cost of a top-$k$ query.
The crucial difference to our work is that
neither \cite{marian04tods} nor \cite{mpro}
consider a similar use of training data,
and require an exact top-$k$ list as the result.

In addition to top-$k$ processing,
we also briefly mention work on
classical database query optimization.
Especially of interest to us is
research on optimizing queries with
\textit{expensive predicates} \cite{hellerstein93,kemper94}.
Of course \cite{mpro} falls to this category as well,
as it considers top-$k$ processing under expensive predicates.
Here the fundamental question concerns
finding a query plan 
to minimize the total execution time
given that
some (restriction) predicates used in the query
are computationally expensive.
These expensive predicates can be e.g.
arbitrary user defined functions.
%% Ideally the cheap predicates are applied first
%% to reduce the number of rows
%% to which the expensive predicate must be applied to.
Our work can be seen in this framework as well.
We consider the processing of a query
(a type of SELECT)
that must return
all rows of the hidden relation that
belong to an approximate top-$k$ set
defined by the given scoring function.
We can assume that
reading one entry of the hidden relation
on a particular row
corresponds to evaluating one expensive predicate
for this row.
While the value of an entry does not directly
specify whether or not the row belongs to the top-$k$ set,
the algorithm that we propose later
in Section~\ref{sect:alg_probab}
gives a probabilistic estimate of this
based on the known entries of a row.

%% Our approach can also be seen as a form of feature selection,
%% a problem very common in machine learning and classification literature
%% (see e.g.~\cite{guyon03jmlr} for a survey).
%% The motivation for feature selection in general, however,
%% is to improve the efficiency of learning a classifier.
%% Usually costs associated with computing feature values
%% are not considered.
%% One recent paper that does address this question is \cite{davis06ecml},
%% where a cost-sensitive decision tree is proposed
%% with the aim to maximize classification accuracy
%% while minimizing the costs to compute features.

\subsubsection{Approximate nearest neighbors}

Since our ranking is based on the inner product
between the rows and a scoring vector,
the top-$k$ set is equivalent to the
set of $k$ nearest neighbors of the scoring vector
if everything is normalized to unit length.
Algorithms for $k$-NN queries
(in high-dimensional Euclidean spaces)
have been widely studied.
In particular, papers related to
approximate nearest neighbor search \cite{indyk04survey}
are of interest in the context of our work.
The usual approach in these
is to reduce the number of
required distance computations
by preprocessing
the set of points that is being queried.
In locality sensitive hashing \cite{gionis99lsh}
the underlying relation is indexed
using a number of hash functions so that
collisions indicate close proximity.
A query vector is only compared to vectors
mapped to the same bucket by the hash function.
Kleinberg \cite{klei99knn}
takes a similar approach
but uses random projections instead of hash functions.
A somewhat different preprocessing technique
%% for approximate $k$-NN search
are low dimensional embeddings \cite{ailon06} that
aim to speed up the processing by
representing the set of points in
a lower dimensional space
where distance computations
can be carried out faster.
Singitham et.~al.~\cite{singitham04} propose
a solution based on clustering of the database,
where the query vector is only compared to points
that reside in clusters whose centroid is close
to the query vector.
Recently Goel et.~al.~\cite{goel08}
propose another technique based on clustering that uses
the query-distribution together with
a variant of the threshold algorithm \cite{fagin2001}.

However, the basic assumption in
\cite{gionis99lsh,klei99knn,singitham04,ailon06,goel08}
and related literature is that
the data being queried
is known a-priori so that
indexing techniques can be applied
to quickly find points that are close to an
arbitrary query vector.
To see the problem that
we discuss in this paper
as $k$-NN search,
we have to turn the setting upside down,
so that the query vector
(i.e.~our scoring weights)
is fixed,
and the set of points 
(i.e.~the rows of our hidden relation)
are sampled from a known distribution.
Moreover, an elementary property of our problem
are the costs associated with
reading values from the hidden relation.
Such assumptions are
to the best of our knowledge
not made in any of the existing work on $k$-NN search.

\subsubsection{Learning theory}

Unlike methods for approximate nearest neighbor search,
some models in computational learning theory take
costs for accessing input items into account
\cite{charikar02,kaplan05lac,cicalese05,greiner06}.
In general this line of work considers
ways to evaluate a (boolean) function when
the inputs %% to the function
are obtained only by paying a price.
An algorithm is given
a representation of the function
(e.g.~a boolean circuit),
and the costs associated with each input.
The algorithm must learn the value of the function
while keeping the cost of the process low.
%% The performance of an algorithm
%% is measured by its
%% \textit{competitive ratio} \cite{charikar02},
%% that is,
%% the ratio of the cost
%% incurred by the algorithm to
%% the optimal cost
%% maximized over every possible setting of
%% values to the variables.
In the simplest case the algorithm
is merely an ordering of the variables.
That is, the function is evaluated by reading
values of the variables
according to a specified order.
This approach is studied
e.g.~in \cite{kaplan05lac,greiner06}, while
more complex algorithms are considered
in \cite{charikar02} and \cite{cicalese05}.

On a high level our problem is similar.
We too are concerned with
evaluating a function
(the top-$k$ query)
while trying to minimize the overall cost.
Especially the problem
of finding a good order in which to read the attribute values
that we discuss in
Section~\ref{section:schedule}
is related.
%% As described later,
%% we access the values in the relation according to some order,
%% and finding a good order can be crucial for
%% the performance of our approach.
This order is important
as it can have a considerable effect
on the performance of our approach.
It would be of interest to see if
any of the previous results
\cite{charikar02,kaplan05lac,cicalese05,greiner06}
can be applied in this case,
but we consider this to be worth a
discussion of its own
in future work.

\subsection{Our contributions}

We conclude this section with
a structure and
summary of the contributions of this paper.
\begin{itemize}
\item Section~\ref{sect:defs}:
We describe (to the best of our knowledge)
a novel top-$k$ search problem.
The main characteristic of the problem is
that instead of applying
preprocessing techniques on the items that
we are ranking,
we have a sample from the same distribution
to be used as training data for machine learning methods.
\item Section~\ref{sect:alg_probab}:
We propose a simple algorithm that
finds a set of $k$ rows
from a given matrix
with a high score according to a fixed linear scoring function.
The algorithm uses two parameters.
The first parameter is a threshold value that
is used to prune items that are unlikely to belong to the top-$k$ set.
The second parameter is an ordering of the attributes.
\item Section~\ref{section:alpha}:
We propose an algorithm for learning
a good value of the threshold parameter
based on training data.
\item Section~\ref{section:schedule}:
We propose an algorithm for learning
a good ordering of the attributes
based on training data.
\item Section~\ref{section:experiments}:
We conduct a set of experiments to
demonstrate the performance of our algorithm(s).
We compare our algorithm
to a simple baseline,
and another algorithm presented previously in \cite{mpro}.
\end{itemize}

\section{Basic definitions}
\label{sect:defs}

\paragraph{Input matrix}
Let $\X$ be an $n \times m$ matrix,
an element of which is denoted by $\X_{ij}$.
The $i$th row of $\X$, denoted $\X_{i \cdot}$,
represents the $i$th item that we are ranking.
Let $A_1, \ldots, A_m$ be a set of $m$ attributes.
The values of attribute $A_j$ appear on
the $j$th column of $\X$, denoted $\X_{\cdot j}$.
For the rest of this paper we will
assume that $\X_{ij} \in \mathbb{R}_0^+$ for all $i$ and $j$.
That is, all entries of $\X$ are nonnegative real numbers.

\paragraph{Cost of a query}
To each attribute $A_j$
is associated
a \textit{cost} $C(A_j)$
that represents the effort of examining the value $\X_{ij}$.
We assume that this computation
is equally hard for all cells in $\X_{\cdot j}$.
The cost of a top-$k$ query is simply
the sum of the costs of all entries that
our algorithm has to inspect in order to
return its output,
normalized by the cost of the trivial algorithm
that computes all entries of $\X$.
We have thus
\begin{equation}
\label{eqn:cost}
\cost_k(\X) = \frac{\sum_{ij} C(A_j)\mathcal{I}\{\X_{ij} \mbox{ is inspected}\}}{n \sum_jC(A_j)},
\end{equation}
where $\mathcal{I}\{X\}$ is $1$ if
the statement $X$ is true, and $0$ otherwise.

\paragraph{Scoring and top-$k$ sets}
Let $\w = (\w_1, \ldots, \w_m)$
be a (row) vector of weights.
%% The weight $w_j$ determines
%% the importance of attribute $A_j$.
The \textit{prefix} of a vector,
denoted $\w_{1:h}$,
is a $h$-dimensional vector
consisting of the elements $\w_1, \ldots, \w_h$.
Likewise,
we denote by $\X_{i, 1:h}$
the prefix of the $i$th row of $\X$.
The \textit{prefix score}
of the $i$th item
is given by the product $\X_{i, 1:h}\w_{1:h}^T$.
When we have $h = m$,
the prefix score is 
the \textit{full score} $\X_{i \cdot}\w^T$.
The \textit{exact} top-$k$ set of $\X$ given $\w$,
denoted $\topk^k_\w(\X)$,
consists of the indices of the $k$ items
with the highest full scores.
More formally, we have
\[
\topk^k_\w(\X) = \bigl\{ i \; \big| \; |\{ i' \neq i : \X_{i' \cdot}\w^T > \X_{i \cdot}\w^T \}| < k \bigr\}.
\]

\paragraph{The schedule}
All algorithms that we consider in this paper
have the common property
that attributes on row $\X_{i \cdot}$ are
examined sequentially in a certain order,
and this order is the same for all $i$.
That is,
the entry $\X_{ij}$ will be read only
if all entries $\X_{ij'}$, with $j' < j$,
have already been read.
We adopt the terminology used in \cite{mpro}
and call this order the \textit{schedule}.
This resembles the order
in which a database system would
apply selection predicates
in a serial (as opposed to conditional)
execution plan.
However, in our case
the benefit of using one schedule over another
is not associated with selection efficiency,
but having better estimates of the full score
given a prefix score.
In Section~\ref{section:schedule} we
discuss a number of simple baseline schedules,
and also present a method for finding
a good schedule %for our algorithm
using training data.
Different choices for the schedule are
compared in the empirical section.

\paragraph{Accuracy of an approximate result}
The algorithm we propose in this paper
is not guaranteed to return the exact top-$k$ set.
Denote by $\atopk^k_\w(\X)$
the $k$ highest scoring items returned by an
inexact top-$k$ algorithm.
We report the accuracy of such an
approximate top-$k$ list
as the fraction of items
in $\atopk^k_\w(\X)$ that also
belong to the exact set $\topk^k_\w(\X)$.
More formally, we have
\begin{equation}
\label{eqn:accuracy}
\acc_k(\X) = \frac{|\topk^k_\w(\X) \cap \atopk^k_\w(\X)|}{k}.
\end{equation}

\paragraph{Problem setting}
The basic objective
of this paper
is to devise an algorithm
that finds an approximate top-$k$ set
with high accuracy at a low cost.
This can be formalized as a computational problem
in a number of ways.
The simplest approach is to assume there
is an external constraint in the form of a
budget $x$ on the costs,
or a requirement $y$ on the accuracy.
Then we could devise algorithms that
maximize accuracy given that the cost can be at most $x$,
or minimize the cost given that the accuracy has to be at least $y$.
The approach we take in this paper is more pragmatic, however.
We discuss an algorithm that uses two parameters,
both of which affect accuracy and cost.
While we give no analytical guarantees about the performance,
we develop methods to systematically find good values
for these parameters,
where goodness is measured by using accuracy and cost
as defined above.

\section{Baseline algorithms}

%%\subsection{Baseline methods}

We compare the algorithm presented in this paper
with two baseline methods.
The first one makes use of a simple branch-and-bound strategy,
while the second one is the MPro algorithm \cite{mpro}.
Unlike the proposed algorithm, %%discussed in Section~\ref{sect:alg_probab}
they do not need training data
and can be applied in a traditional top-$k$ setting.
Instead they rely on upper bounds,
denoted $U(A_i)$,
for the values of each attribute $A_i$.
These can be based either
on prior knowledge of the attribute domains,
or alternatively on a separate training data.
Combined with the prefix score of the row $\X_{i \cdot}$,
we can use these to upper bound the full score of $\X_{i \cdot}$.
More formally, denote by $U_h(i)$
an upper bound for the full score $\X_{i \cdot}\w^T$
given the prefix score $\X_{i, 1:h}\w^T$
and the upper bounds for the attributes outside the prefix.
We have thus
\[
U_h(i) = \X_{i, 1:h}\w_{1:h}^T + \sum_{i=h+1}^m U(A_i).
\]

\subsection{Simple upper bounding}
\label{sect:alg_ub}

A very straightforward approach to our top-$k$ problem
is the following:
consider the upper bound $U_h(i)$ for the row $i$
after computing the values in a prefix of length $h$.
If this upper bound is below the full score of the
lowest ranking item of the current top-$k$ list,
we know that $\X_{i \cdot}$ can not belong to the final top-$k$ list.
Therefore it is not necessary to compute the remaining values,
and we can skip the row.

To apply this heuristic, we need to first get a
candidate top-$k$ set.
This we obtain by reading
all values of the first $k$ rows of $\X$,
and computing their full scores.
Denote by $\delta$ the lowest score in the current top-$k$ set.
For the remaining rows of $\X$,
we start computing the prefix score,
and each time a new attribute is added to the prefix,
we check the value of $U_h(i)$.
If it is below the current value of $\delta$,
we skip the rest of $\X_{i \cdot}$,
if not, we examine the value of the next attribute.
Once all attributes for a row have been computed,
we know its full score,
and can determine whether or not
it enters the current top-$k$ list.
If it does, we update $\delta$ accordingly.
In the remaining of this paper we call this algorithm
the \ubalg algorithm.

The performance of this method depends on
how rapidly $\delta$ reaches a level that
leads to efficient pruning.
Obviously when $\delta$ is small the value of $U_h(i)$
will always be larger.
%% especially for short prefixes.
We can improve the efficiency of the method
with the following heuristic:
Note that the value of $A_1$
is always computed for every row.
This is because $U_0(i)$ is always
larger than any possible $\delta$,
so nothing will be pruned at this point.
We can thus compute all values in the column $\X_{\cdot 1}$,
and rank the rows of $\X$ in decreasing order of this
without sacrificing anything in the final cost.
After sorting our initial top-$k$ list
will contain rows that have a high value at least
in the first attribute.
They are thus somewhat more likely
to have a high full score
than randomly chosen rows.

%% \begin{enumerate}
%% \item Compute the value of the attribute $A_{\psi(1)}$ for every row.
%% \item Rank the rows in decreasing order of this value.
%% \item Starting from the top, incrementally compute the prefix score for the current row
%% \end{enumerate}

\subsection{The MPro algorithm}
\label{sect:alg_mpro}

The MPro algorithm of \cite{mpro} can be seen as the
well known A$^*$ algorithm \cite[p.97ff]{russellnorvig}
adopted for the top-$k$ query problem.
Like the \ubalg algorithm,
it also computes entries in $\X$ in a left-to-right fashion,
i.e., the algorithm does not access $\X_{ij}$ unless
the value $\X_{ij'}$ has been read for all $j' < j$.
%% We use the same definition
%% of an upper bound $U_h(i)$ for the full score
%% of a row.
%% given the exact score of a prefix.
For every row $\X_{i \cdot}$ the algorithm
maintains the upper bound $U_h(i)$.
The rows are stored in a priority queue $Q$
with $U_h(i)$ as the key, i.e.,
the the first row in the queue is the
one with the highest upper bound.
The algorithm pops rows from $Q$ one by one,
computes the next unknown entry,
updates the upper bound and inserts the row back into $Q$,
or outputs it as a member of the top-$k$ set if
all values have been computed.
When the output size reaches $k$, the algorithm terminates.
As with the \ubalg algorithm,
as a first step
the value of the attribute $A_1$
is computed for all rows
to compute the initial values of the upper bounds.
These are used to initialize $Q$.
In the remaining of this paper,
we call this algorithm the \mproalg algorithm.

\section{An algorithm based on prior knowledge}
\label{sect:alg_probab}

In this section we describe a method
that finds $k$ high scoring rows
of a given matrix $\X$ using a fixed scoring vector $\w$.
A difference to the baseline methods is that
%% the result will be approximate as
%% errors may be introduced in the form of
%% missing and added tuples.
the algorithm requires prior knowledge of
the distribution of the values in $\X$.
In practice this means we need
training data in form of one or several matrices $\X'$
that are drawn from the same distribution as $\X$.
The algorithm has two parameters
that can be adjusted to tune its performance.
We also provide algorithms for
finding good values for these parameters
from training data.

\subsection{Algorithm outline}

On a high level
the algorithm is based on the
same basic principle
as the \ubalg algorithm.
We scan the rows of $\X$ one by one and
incrementally compute the prefix score for each row.
This is done
until we can discard the remaining entries of the row
based on some criterion,
or until we have computed the full score.
If we decide to skip the row based on a prefix score,
we never return to inspect
the remaining entries of the same row.
However, unlike with the \ubalg or \mproalg algorithms,
we are not using simple upper bounds
for the remaining attributes.
Instead we use the training data $\X'$
to learn a model that allows us to estimate
the probability that the current row
will enter the current top-$k$ set
given the prefix score.
If this probability is below a given threshold value,
we skip the row.

Suppose that we currently have
a candidate set of top-$k$ rows.
%denoted $\C_k$.
%% Let $\x$ denote some row of $\X$.
% For every $\x \in \C_k$ we have
% computed the value of every attribute $A_j$.
%%Let $\delta = \min_{\x \in \C_k}\{ \x\w^T \}$,
Denote by $\delta$
the lowest score in the candidate set,
and let $\X_{i \cdot}$ be the row that
the algorithm is currently considering.
Given a prefix score of $\X_{i \cdot}$,
we can give an estimate for the full score $\X_{i \cdot}\w^T$,
and make use of this together with $\delta$ to decide
whether or not it is worthwhile to
compute the remaining,
still unknown values of $\X_{i \cdot}$.
More precisely, we want to estimate the probability
that $\X_{i \cdot}$ would enter the current top-$k$ set
given the prefix score $\X_{i,1:h}\w_{1:h}^T$, that is
\begin{equation}
\label{eqn:probab}
\Pr\left( \X_{i \cdot}\w^T > \delta \; | \; \X_{i,1:h}\w_{1:h}^T \right).
\end{equation}
If this probability is very small,
say, less than 0.001,
it is unlikely that $\X_{i \cdot}$ will ever enter the top-$k$ set.
In this case we can skip $\X_{i \cdot}$
without computing values of its remaining attributes.
Of course this strategy may lead to errors,
as in some cases
the prefix score may give poor estimates of the full score,
which in turn causes the probability estimates to be incorrect.
The details of estimating Equation~\ref{eqn:probab}
are discussed in Section~\ref{section:probab}.

%% While it would be possible to
%% probe the values of $\bar{\x}$ at random,
%% we will focus on a deterministic approach.
%% Similar to \cite{mpro},
%% we assume there exists a \textit{schedule} $\sigma$
%% that defines the order
%% in which we compute the attributes of a tuple.
%% Formally the schedule is a permutation
%% of the integers $1, \ldots, j$, and
%% we process the attributes in the order
%% $A_{\sigma(1)}, A_{\sigma(2)}, \ldots, A_{\sigma(m)}$.
%% Finding a good schedule is discussed extensively
%% in \cite{mpro} in the context of the MPro algorithm.
%% We consider this problem in Section~\ref{section:schedule}.
\begin{algorithm}[t]
\caption{(The \probabalg algorithm)}
\label{alg:probab}
Input: the $n \times m$ matrix $\X$, parameter $\alpha \in [0,1]$\\
Output: an approximate top-$k$ set
\begin{algorithmic}[1]
\STATE Compute all values in column $\X_{\cdot 1}$ and sort the rows of $\X$ in decreasing order of this value.
\STATE $\C_k \leftarrow \{ \X_{1 \cdot}, \ldots, \X_{k \cdot} \}$
\STATE $\delta \leftarrow \min_{\x \in \C_k} \{ \x\w^T\}$
\FOR{$i=k+1$ to $n$}
  \STATE $h \leftarrow 1$
  \WHILE{$h < m$ and $\Pr( \X_{i \cdot}\w^T > \delta \; | \; \X_{i, 1:h}\w_{1:h}^T ) > \alpha$}
    \STATE $h \leftarrow h + 1$
    \STATE Compute the value $\X_{ih}$.
  \ENDWHILE
  \IF{$h = m$ and $\X_{i \cdot}\w^T > \delta$}
    \STATE $\C_k \leftarrow \C_k \setminus \arg\min_{\x \in \C_k}\{ \x\w^T\}$
    \STATE $\C_k \leftarrow \C_k \cup \X_{i \cdot}$
    \STATE $\delta \leftarrow \min_{\x \in \C_k}\{ \x\w^T \}$
  \ENDIF
\ENDFOR
\STATE \textbf{return} $\C_k$
\end{algorithmic}
\end{algorithm}

An outline of the \probabalg algorithm we propose
is given in Algorithm~\ref{alg:probab}.
It uses a parameter $\alpha$ that
determines when remaining entries
on the row $\X_{i \cdot}$ are to be skipped.
Whenever we have
$\Pr( \X_{i \cdot}\w^T \geq \delta \; | \; \X_{i,1:h}\w_{1:h}^T ) < \alpha$
we proceed with the next row.
Selecting an appropriate value of $\alpha$
is discussed in Section~\ref{section:alpha}.
As with the baseline algorithms,
we also need an order, the schedule,
in which to process the attributes.
This is the 2nd parameter of our algorithm.
In Section~\ref{section:schedule} 
we describe a number of simple baseline schedules,
and also propose a method that uses training data
to learn a good schedule for the \probabalg algorithm.

\subsection{Estimating the probabilities}
\label{section:probab}

The most crucial part of our algorithm
is the method for estimating the probability
$\Pr\bigl( \X_{i \cdot}\w^T > \delta \; | \; \X_{i,1:h}\w_{1:h}^T \bigr)$.
In short, the basic idea is to
estimate the distribution of $\X_{i \cdot}\w^T$
given the prefix score $\X_{i, 1:h}\w_{1:h}^T$.
We do this by
learning regression models that predict
the parameters of this distribution
as a function of the prefix score.
Together with $\delta$ the desired probability can be
found out using this distribution.
The details of this are discussed next.

%% Clearly we will have some variation
%% in the full score $\X_{i \cdot}\w^T$
%% for (more or less) the same value of the prefix score.
%% This is because the values of
%% attributes outside the prefix
%% differ across different rows of $\X$.
%% The interesting question is thus
%% how to characterize this variation of $\X_{i \cdot}\w^T$.

The basic assumption of this paper
is that the distribution of the full score 
$\X_{i \cdot}\w^T$ given a fixed prefix score
is Gaussian.
We acknowledge that this may not be true in general.
However, according to
the central limit theorem,
as the number of attributes increases,
their sum approaches a normal distribution
as long as they are independent.
(The attributes need not follow the same distribution
as long as they are bounded,
see the Lindeberg theorem \cite[page 254]{feller50}.)
Of course the attributes may not be independent,
and also their number may not be large enough
to fully warrant this argument in practice.
Nonetheless, we consider this a reasonable first step.

By convention, we denote the parameters of the normal distribution
by $\mu$ and $\sigma$,
where $\mu$ is the mean
and $\sigma$ the standard deviation.
Furthermore, we assume that both $\mu$ and $\sigma$
depend on the prefix score,
and we must account for prefixes of different lengths.
Denote by $s_h$ a prefix score that
is based on the first $h$ attributes.
The assumption is that
$\X_{i \cdot}\w^T \sim N\bigl(\mu(s_h), \sigma(s_h)\bigr)$.
Once we have some estimates for $\mu(s_h)$ and $\sigma(s_h)$,
we simply look at the tail of the distribution
and read the probability of
$\X_{i \cdot}\w^T$ being larger than a given $\delta$.
To learn the functions
$\mu(s_h)$ and $\sigma(s_h)$
we use training data.
For every possible prefix length $h$,
we associate the prefix score of the row $\X_{i \cdot}$
with the full score of $\X_{i \cdot}$.
That is,
our training data consists of
the following set of
(``prefix score'', ``full score'') pairs for every $h$:
\begin{equation}
\label{eqn:training1}
\mathcal{X}_h = \{ (\X_{i, 1:h}\w_{1:h}^T, \X_{i \cdot}\w^T ) \}_{i=1}^n.
\end{equation}

Now we have to estimate $\mu(s_h)$ and $\sigma(s_h)$.
One approach is to use binning.
Given $s_h$ and $\mathcal{X}_h$, we could compute the set
\[
B(s_h) = \{ b \; | \; a \in \Bin(s) \wedge (a,b) \in \mathcal{X}_h \}
\]
that contains full scores of objects
that have a  prefix score belonging to the same bin as $s_h$.
The bins are precomputed in advance by some suitable technique.
Now we can define $\mu(s_h)$ and $\sigma(s_h)$
simply as their standard estimates in $B(s_h)$.
This approach has some drawbacks, however.
First, we need to store the sets $\mathcal{X}_h$ for every $h$.
This might be a problem if $n$ and $m$ are very large.
Whereas if $n$ is small,
we either have to use large bins,
which leads the estimates being only coarsely connected to $s_h$,
or use narrow bins with only a few examples in each,
which will also degrade the quality of the estimators.

To remedy this
we use an approach based on kernel smoothing \cite{kernelbook}.
Instead of fixed bins,
we consider all of $\mathcal{X}_h$
when computing an estimate of $\mu(s_h)$ or $\sigma(s_h)$.
The idea is that
a pair $(a,b) \in \mathcal{X}_h$ contributes
to the estimates with a weight that
depends on the distance between
the prefix scores $a$ and $s_h$.
The pair contributes a lot if $a$ is close to $s_h$,
and only a little (if at all) if the distance is large.
Denote by $K: \mathbb{R} \times \mathbb{R} \rightarrow \mathbb{R}$
a \textit{kernel function}.
For the rest of this paper we let
\begin{equation}
K(x,y) = e^{- \frac{|| x - y ||}{\beta}},
\end{equation}
where $\beta$ is a parameter.
Other alternatives could be considered as well,
the proposed method is oblivious to
the choice of the kernel function.

Using $K$ we can define
the kernel weighted estimates for
$\mu(s_h)$ and $\sigma(s_h)$.
We let
\begin{equation}
\label{eqn:kernel_mu}
\mu(s_h) = \frac{\sum_{(a,b) \in \mathcal{X}_h}K(a,s_h)b}{\sum_{(a,b) \in \mathcal{X}_h}K(a,s_h)},
\end{equation}
that is, any full score $\X_{i \cdot}\w^T$
contributes to $\mu(s_h)$
with the weight $K(\X_{i,1:h}\w_{1:h}^T, s_h)$.
The nice property of this approach is
that it can be also used to estimate
the standard deviation of the full score at $s_h$
by letting
\begin{equation}
\label{eqn:kernel_sigma}
\sigma(s_h) = \sqrt{\frac{\sum_{(a,b) \in \mathcal{X}_h}K(a,s_h)b^2}{\sum_{(a,b) \in \mathcal{X}_h}K(a,s_h)} - \mu(s_h)^2}.
\end{equation}
The above equation is a simple variation
of the basic formula $\Var[X] = \E[X^2] - \E[X]^2$,
where the kernel function is taken into account.

One problem associated with
kernel smoothing techniques in general is
the \textit{width} of the kernel
that in this case is defined by
the parameter $\beta$.
Small values of $\beta$
have the effect that the prefix score
$a$ of a pair $(a,b) \in \mathcal{X}_h$ must be very close to $s_h$
for the full score $b$ to contribute anything to the final estimates.
Larger values have the opposite effect,
even points that are far away from $s_h$ will influence the estimates.
Selecting an appropriate width for the kernel is not trivial.
We observed that setting $\beta$
to one 5th of the standard deviation of the prefix scores for $h$
gives good results in practice.

While this technique
lets us avoid some of the problems
related to the binning approach,
it comes at a fairly high computational cost.
We have to evaluate the kernel $n$ times
to get estimates for $\mu(s_h)$ and $\sigma(s_h)$ for one $s_h$.
These estimates must be computed potentially
for every possible prefix of every row in $\X$.
This results in $O(n^2m)$ calls to $K(x,y)$
for one single query
(assuming both the training data and
the input matrix have $n$ rows),
which clearly does not scale.
Hence, we introduce approximate estimators
for $\mu(s_h)$ and $\sigma(s_h)$
that are based on simple linear regression models.
This way we do not need to
evaluate $K(x,y)$ at query time at all.
We let
\begin{equation}
\label{eqn:muhat}
\hat{\mu}(s_h) \sim q_1^\mu s_h + q_0^\mu,
\end{equation}
and
\begin{equation}
\label{eqn:sigmahat}
\hat{\sigma}(s_h) \sim q_1^\sigma s_h + q_0^\sigma.
\end{equation}
The parameters $q_0^\mu$, $q_1^\mu$,
$q_0^\sigma$, and $q_1^\sigma$ are the standard estimates
for linear regression coefficients
given the sets
\begin{equation}
\label{eqn:training3}
T_\mu = \{ (\X_{i, 1:h}\w_{1:h}, \mu(\X_{i, 1:h}\w_{1:h}) \}_{i=1}^n,
\end{equation}
and
\begin{equation}
\label{eqn:training4}
T_\sigma = \{ (\X_{i, 1:h}\w_{1:h}, \sigma(\X_{i, 1:h}\w_{1:h}) \}_{i=1}^n,
\end{equation}
where $\mu(\X_{i, 1:h}\w_{1:h})$
and $\sigma(\X_{i, 1:h}\w_{1:h})$
are based on equations
\ref{eqn:kernel_mu} and \ref{eqn:kernel_sigma},
respectively.
We thus compute the kernel estimates only for the training data.
Given $T_\mu$ and $T_\sigma$ we
learn linear functions that are used at query time
to estimate the parameters of the normal distribution 
that we assume the full scores are following.

Our method for estimating the probability
$\Pr\bigl( \x\w^T > \delta \; | \; \x_{1:h}\w_{1:h}^T \bigr)$
can be summarized as follows:
\begin{enumerate}
\item Given a training data
(a matrix $\X$ with all entries known),
compute for each row the full score,
and associate this with the prefix scores
for each possible prefix length $h$.
That is, for each $h$ compute the set
$\mathcal{X}_h$ as defined in Equation~\ref{eqn:training1}.
\item Using the definitions for $\mu(s_h)$ and $\sigma(s_h)$
given in equations \ref{eqn:kernel_mu} and \ref{eqn:kernel_sigma},
compute the sets $T_\mu$ and $T_\sigma$ defined in
equations \ref{eqn:training3} and \ref{eqn:training4},
respectively.
\item Learn the models in
equations \ref{eqn:muhat} and \ref{eqn:sigmahat}
by fitting a regression line
to the points in $T_\mu$ and $T_\sigma$, respectively.
\item At query time,
use the cumulative density function of
$N\bigl(\hat{\mu}(\X_{i, 1:h}\w_{1:h}^T), \hat{\sigma}(\X_{i, 1:h}\w_{1:h}^T)\bigr)$
to estimate the probability of
$\X_{i \cdot}\w^T$ being larger than $\delta$.
\end{enumerate}

\section{Parameter selection}

In this section we discuss
systematic methods for
choosing the parameters required by
the algorithm presented above.

\subsection{Choosing the right $\alpha$}
\label{section:alpha}

We start by describing
a method for learning an ``optimal'' value of $\alpha$
given training data $\X$.
This can be very useful,
since setting
the value of $\alpha$ too low
will decrease
the performance of Algorithm~\ref{alg:probab}
in terms of the cost.
When $\alpha$ increases,
the algorithm will clearly prune more items.
This leads both to a lower cost and a lower accuracy.
Conversely, when alpha decreases,
the accuracy of the method increases,
and so does the cost as less items are being pruned.
The definitions of accuracy and cost in
equations \ref{eqn:accuracy} and \ref{eqn:cost}, respectively,
thus depend on $\alpha$.
We denote by $\acc_k(\X, \alpha)$ and $\cost_k(\X, \alpha)$
the accuracy and cost
attained by the \probabalg algorithm
for a given value of $\alpha$.

Due to the trade-off between cost and accuracy,
we should set $\alpha$ as high (or low) as possible
without sacrificing too much in accuracy (or cost).
While a very conservative estimate for $\alpha$,
say $0.001$, is quite likely
to result in a high accuracy,
it can perform sub-optimally in terms of the cost.
Maybe with $\alpha = 0.05$ we obtain an almost equally high accuracy
at only a fraction of the cost.

Consider a coordinate system
where we have accuracy on the x-axis
and cost on the y-axis.
In an ideal setting we would have
a accuracy of $1$ at zero cost,
represented by the point at $(1,0)$
on this accuracy-cost plane.
Obviously this is not attainable in reality,
since we always have to inspect some of the entries of $\X$,
and this will lead to a nonzero cost.
But we can still define
the optimal $\alpha$ in terms of this point.

%% Let $\P(\X)$ be the matrix of probabilities
%% estimated by the approach of Section~\ref{sect:probab}
%% for the matrix $\X$
%% using a fixed $\delta$.
%% The $\delta$ we use is the lowest score
%% that still makes it into the top-$k$ list of $\X$.
%% If $\X$ is clear from the context,
%% we write $\P = \P(\X)$.
%% Denote by $\acc_P(\alpha)$ and $\cost_P(\alpha)$
%% the accuracy and cost incurred by Algorithm~\ref{alg:probab}
%% on the probability matrix $\P$.
%% We define the optimal value of $\alpha$ as follows.
\begin{definition}
\label{def:alpha}
Let
\[
\dist_k(\X, \alpha) = || \bigl(\acc_k(\X, \alpha), \cost_k(\X, \alpha) \bigr) - (1,0) ||.
\]
The optimal $\alpha^*$ given the matrix $\X$ satisfies
\[
\alpha^* = \arg\min_{\alpha \in [0,1]} \dist_k(\X, \alpha),
\]
where $||\cdot||$ denotes the Euclidean norm.
\end{definition}
That is, we want to find an $\alpha$
that minimizes the distance to the point $(1,0)$
on the accuracy-cost plane.
Clearly this is a rather simple definition.
It assigns equal weight to accuracy and cost,
even though we might prefer one over the other,
depending on the application.
However, modifying the definition to take
such requirements into account is easy.

Next we discuss how to find $\alpha^*$.
In the definition we state that it
has to belong to the interval $[0,1]$.
However, first we observe that there exists
an interval $[\alpha_{\min}, \alpha_{\max}]$,
so that when $\alpha \leq \alpha_{\min}$
we have $\acc_k(\X, \alpha) = 1$,
and when $\alpha \geq \alpha_{\max}$
we have $\acc_k(\X, \alpha) = 0$.
Clearly the the interesting $\alpha$
in terms of Definition~\ref{def:alpha}
lies in $[\alpha_{\min}, \alpha_{\max}]$.
We can analyze the values in this interval even further.
Consider the following set of possible values for $\alpha$:
\begin{equation}
\label{eqn:qx}
Q(\X) = \{ \min_h \Pr( \x\w^T > \delta \; | \; \x_{1:h}\w_{1:h}^T) \}_{\x \in \topk^k_\w(\X)},
\end{equation}
where $\delta = \min_{x \in \topk^k_\w(\X)} \{ \x\w^T \}$.
That is, 
for each $\x \in \topk^k_\w(\X)$,
$Q(\X)$ contains the value $a$ 
so that when $\alpha > a$, Algorithm~\ref{alg:probab} will prune $\x$.
More precisely,
if we order the values in $Q(\X)$ in ascending order,
and let $a_i$ denote the $i$th value in this order,
we know that when $\alpha \in [a_i, a_{i+1})$
the algorithm will prune exactly $i$ rows of the
correct top-$k$ set of $\X$.
(Assuming that all $a_i$ are different.)
By letting $\alpha$ vary from $a_1 = \alpha_{\min}$ to $a_k < \alpha_{\max}$,
$\acc_k(\X,\alpha)$ decreases from $1$ to $1/k$ in steps of $1/k$.
Likewise, $\cost_k(\X,\alpha)$ decreases as $\alpha$ increases.
Now we can systematically express $\cost_k(\X,\alpha)$
as a function of $\acc_k(\X,\alpha)$,
since each $a \in Q(\X)$ is associated with
a certain accuracy.

This makes finding the optimal $\alpha$ easy.
We solve the optimization problem of Definition~\ref{def:alpha}
by only considering values in $Q(\X)$.
In fact, we can show that
an $\alpha^*$ obtained this way
is the same as the one we would obtain by
having the interval $[0,1]$ as the feasible region.
\begin{lemma}
Let $\alpha^* = \arg\min_{\alpha \in [0,1]} \dist_k(\X, \alpha)$. We have
$\alpha^* \in Q(\X)$,
where $Q(\X)$ is defined as in Equation~\ref{eqn:qx}.
\end{lemma}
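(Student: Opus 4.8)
The plan is to show that the objective $\dist_k(\X,\alpha)$ is piecewise constant as a function of $\alpha$, and that its value only changes at the points of $Q(\X)$; this forces the minimizer to be attainable at some element of $Q(\X)$.

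First I would recall the structure established in the discussion preceding the lemma. Order the values of $Q(\X)$ ascending as $a_1 \le a_2 \le \cdots \le a_k$, and set $a_{k+1} = \alpha_{\max}$. The key observation, already argued in the text, is that for $\alpha \in [a_i, a_{i+1})$ the algorithm prunes exactly $i$ rows of the true top-$k$ set, so that $\acc_k(\X,\alpha) = (k-i)/k$ is constant on this half-open interval. The crucial additional claim I need is that $\cost_k(\X,\alpha)$ is \emph{also} constant on each such interval $[a_i, a_{i+1})$. This is because the cost is determined entirely by which entries $\X_{ij}$ get inspected, and for a fixed row the pruning decision in the \texttt{WHILE} loop depends on $\alpha$ only through the comparisons $\Pr(\X_{i\cdot}\w^T > \delta \mid \X_{i,1:h}\w_{1:h}^T) > \alpha$; as long as $\alpha$ does not cross one of the threshold probabilities recorded in $Q(\X)$, the set of entries inspected does not change, and neither does $\delta$ along the run. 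Hence both coordinates $(\acc_k,\cost_k)$ are constant on $[a_i,a_{i+1})$, so $\dist_k(\X,\alpha)$ is constant there too.

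Given this, I would finish as follows. Since $[\alpha_{\min},\alpha_{\max}]$ is partitioned into the intervals $[a_i,a_{i+1})$ (together with the trivial tails where accuracy is $1$ or $0$), and $\dist_k(\X,\cdot)$ is constant on each piece, the function takes only finitely many values over $[0,1]$. On each piece the constant value is exactly the value attained at the left endpoint $a_i \in Q(\X)$. Therefore the global minimum over $[0,1]$ equals the minimum of $\dist_k(\X,a)$ over $a \in Q(\X)$, and this minimum is achieved at some $a_i$. Thus there exists a minimizer $\alpha^*$ with $\alpha^* \in Q(\X)$, which is precisely the claim. One should note the mild tie-breaking subtlety: if several intervals share the minimal $\dist_k$ value, any of their left endpoints works, so the statement $\alpha^* \in Q(\X)$ holds for an appropriate choice of minimizer.

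The main obstacle is the cost-constancy step. The accuracy-constancy claim is essentially handed to us by the paragraph before the lemma, but the corresponding statement for cost requires care, because $\delta$ evolves dynamically as the algorithm processes rows, and one must verify that small perturbations of $\alpha$ within $[a_i,a_{i+1})$ never alter any pruning decision on \emph{any} row --- not merely on the rows of the true top-$k$ set. The cleanest way to handle this is to observe that each pruning test, across the entire run and for every row, compares $\alpha$ against some probability value; the full (finite) collection of such threshold values, restricted to those that actually affect which rows of $\topk^k_\w(\X)$ survive, is captured by $Q(\X)$, and between consecutive such thresholds the entire execution trace of the algorithm is invariant. Making this invariance argument precise, while accounting for the coupling between the candidate set $\C_k$, the running value of $\delta$, and the per-row probability estimates, is where the real work lies.
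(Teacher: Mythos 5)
Your overall strategy (reduce the search over $[0,1]$ to the finitely many points of $Q(\X)$ by analyzing how $\dist_k$ behaves between consecutive elements of $Q(\X)$) is the same as the paper's, but your key claim --- that $\cost_k(\X,\alpha)$ is \emph{constant} on each interval between adjacent values of $Q(\X)$, so that $\dist_k$ is piecewise constant with value attained at the \emph{left} endpoint --- is false, and the paper explicitly does not assert it. The set $Q(\X)$ records, for each row of the true top-$k$ set only, the single value $\min_h \Pr(\x\w^T > \delta \mid \x_{1:h}\w_{1:h}^T)$. It does not contain the pruning thresholds of the other $n-k$ rows, nor the non-minimal per-prefix probabilities of the top-$k$ rows. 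As $\alpha$ sweeps across an interval $(a_i,a_{i+1})$ it crosses many of those other thresholds; each crossing shortens the prefix read for some row (typically a non-top-$k$ row), which strictly decreases the cost while leaving the accuracy unchanged. So the execution trace is \emph{not} invariant between consecutive elements of $Q(\X)$: $\cost_k(\X,\cdot)$ is a non-increasing step function with jumps strictly inside the interval, and your proposed resolution of the ``real work'' (that all thresholds affecting the trace are captured by $Q(\X)$) does not hold.

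This matters for the conclusion, not just the bookkeeping. If accuracy is constant on the interval while cost decreases, then $\dist_k(\X,\cdot)$ \emph{decreases} across the interval, so its minimum over the interval sits at the right end, not at the left endpoint as your argument requires. The paper's proof is built around exactly this asymmetry: moving from $a_i$ to $a_i+\epsilon$ the accuracy drops by $1/k$ and $\dist_k$ jumps up by some $\Delta_1>0$; from $a_i+\epsilon$ to $a_{i+1}$ the accuracy stays fixed while the cost may keep decreasing, so $\dist_k$ decreases by some $\Delta_2>0$ down to its value at $a_{i+1}$. Hence on $[a_i,a_{i+1}]$ the minimum is attained at $a_i$ or at $a_{i+1}$ (whichever of $\Delta_1$, $\Delta_2$ is larger decides which), both of which lie in $Q(\X)$. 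To repair your write-up you would replace the cost-constancy claim with cost-monotonicity in $\alpha$ and then run this endpoint comparison; the piecewise-constant picture should be kept only for the accuracy coordinate.
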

\begin{proof}
%% As we increase $\alpha$,
%% both $\cost_k(\X, \alpha)$ and $\acc_k(\X, \alpha)$ decrease.
We show that for all $\alpha$s that lie between
any two adjacent values in $Q(\X)$,
the distance $\dist_k(\X, \alpha)$
is larger than when $\alpha$ is chosen from $Q(\X)$.
Consider any $a_i$ and $a_{i+1}$ in $Q(\X)$.
We show that
within the interval $[a_i, a_{i+1}]$
the distance $\dist_k(\X, \alpha)$ is minimized
for either $\alpha = a_i$ or $\alpha = a_{i+1}$.
As $\alpha$ increases
from $a_i$ to $a_i + \epsilon$ for some small $\epsilon > 0$,
$\acc_k(\X, \alpha)$ decreases by $1/k$,
and $\dist_k(\X, \alpha)$ increases by
$\bigl(\dist_k(\X, a_i + \epsilon) - \dist_k(\X, a_i)\bigr) = \Delta_1 > 0$.
When we further increase $\alpha$ from $a_i + \epsilon$ to $a_{i+1}$,
$\acc_k(\X, \alpha)$ stays the same,
but $\cost_k(\X, \alpha)$ may decrease.
Therefore, $\dist_k(\X, \alpha)$ decreases until $\alpha = a_{i+1}$.
We let
$\bigl( \dist_k(\X, a_i + \epsilon) - \dist_k(\X, a_{i+1}) \bigr) = \Delta_2 > 0$.
If $\Delta_1 > \Delta_2$,
we have $\dist_k(\X, a_i) < \dist_k(\X, a_{i+1})$,
otherwise
$\dist_k(\X, a_i) > \dist_k(\X, a_{i+1})$.
\end{proof}

% Finally, we must take into account that
% our training data may consist of
% a number of different matrices.
% Indeed, we can have a set of matrices,
% denoted $\X_1, \ldots, \X_l$,
% each corresponding to, say, a different query.
% To make use of them all
% we simply collect the sets
% $Q(\X_1), \ldots, Q(\X_l)$
% and select the optimal $\alpha^*$ from their union.
% The complexity of finding $\alpha^*$
% scales thus linearly in $l$ and $k$.

%% To make use of them all,
%% we first compute
%% the top-$k$ list for every $\X_i$.
%% This gives us a different $\delta_i$,
%% one for every $\X_i$.
%% Given $\delta_i$, we compute $\P(\X_i)$
%% for $i = 1, \ldots, l$.
%% We use the average cost and average accuracy
%% over these to find the globally optimal $\alpha$.
%% More formally, we solve the optimization problem
%% using
%% \[
%% \cost(\alpha) = \frac 1l \sum_i \cost_{\P(\X_i)}(\alpha)
%% \]
%% and
%% \[
%% \acc(\alpha) = \frac 1l \sum_i \acc_{\P(\X_i)}(\alpha).
%% \]

\subsection{Choosing a schedule}
\label{section:schedule}

So far we have not considered
the order,
the \textit{schedule},
in which the columns of $\X$ should be processed.
This order has a considerable impact
on the performance of the algorithms.
Processing the attributes in a certain order
will lead to a tighter upper bound on the full score
in case of
the \ubalg and \mproalg algorithms.
With the \probabalg algorithm
the probability estimates will be more accurate
with some permutations of the attributes than others.
A similar problem
was considered in \cite{mpro}
for the \mproalg algorithm.
The approach is different, however,
as training data is not used
and an optimal schedule must be found at query time.

\subsubsection{Baseline schedules}

Given the ranking vector $\w$,
and the cost $C(A_i)$ for each attribute $A_i$,
we consider four simple baselines
for the schedule:
\begin{itemize}
\item[A:] Read the attributes in \textit{random order}.
This is the simplest possible way of choosing a schedule.
We pick a random total order of $m$ items uniformly
from the set of all permutations and use this as the schedule.
\item[B:] Read the attributes in
\textit{decreasing order of the absolute values in} $\w$.
This can be motivated by
the fact that attributes with a larger weight
(the important attributes)
will have a bigger impact on
the full score $\w^T\X_{i \cdot}$.
%% SHOULD BE USED IN JOURNAL VERSION
In some cases we might have a fairly accurate
estimate of $\w^T\x$ already after a very short prefix
of the row $\X_{i \cdot}$ has been computed
This in turn will lead to better pruning,
since the estimates of the probability
$\Pr\bigl( \w^T\X_{i \cdot} > \delta \; | \; \w_{1:h}^T\X_{i, 1:h} \bigr)$
are more accurate.
The downside of this approach is that
the costs are not taken into account.
It is possible that the important attributes
have almost the same absolute value in $\w$,
but considerably different costs.

\item[C:] Read the attributes in
\textit{increasing order of the cost} $C(A_i)$.
This is based on the assumption that
by computing the ``cheap'' features first,
we might be able to prune objects
without having to look at the expensive attributes at all.
%% SHOULD BE USED IN JOURNAL VERSION
However, this time we may end up
computing a long prefix of $\X_{i \cdot}$,
because it is possible that
some of the ``cheap'' attributes
have a low weight in $\w$,
and thereby do not contribute so much to the full score.

\item[D:] Read the attributes in
\textit{decreasing order of the ratio} $|\w_i|/C(A_i)$.
By this we try to remedy the downsides of
the previous two approaches.
The value of an attribute is
high if it has a large weight in $\w$, and a small cost.
Conversely, attributes with a small weight and a high cost
are obviously less useful.
\end{itemize}

\subsubsection{Learning a schedule from training data}

In addition to the baselines above,
we can also try to find a schedule
by using available information.
In general we want to find a schedule
that minimizes the cost
of finding the top-$k$ set
in the training data.
One difficulty here is the selection of $\alpha$.
The cost of a given schedule $\psi$
depends on the value of $\alpha$,
and the optimal schedule might
be different for different values of $\alpha$.
One option would be to
fix $\alpha$ in advance.
However, we want to avoid this,
because the $\alpha$ we use for finding the schedule
might be different from the $\alpha$
that is used when running Algorithm~\ref{alg:probab}.
(After learning the schedule $\psi$,
we can use the method described in Section~\ref{section:alpha}
to find an optimal value of $\alpha$
\textit{given} $\psi$.)
Another option would be to
simultaneously learn
an optimal schedule $\psi$ and the optimal $\alpha$.
This does not seem trivial, however.
Instead, we take an approach where
we try to find a schedule
that is good
\textit{independent} of the final choice of $\alpha$.

If $\alpha$ were fixed,
we could define a cost for the schedule $\psi$
in terms of $\acc_k(\X, \alpha)$ and $\cost_k(\X,\alpha)$.
However, instead of considering a particular value of $\alpha$,
we define the cost as a sum over
all possible meaningful values of $\alpha$.
Recall that the set $Q(\X)$ (see Equation~\ref{eqn:qx})
contains all ``threshold'' values so that
when $\alpha$ crosses these,
$\acc_k(\X, \alpha)$ decreases by $1/k$.
We define the cost of the schedule $\psi$ given $\X$ as
\begin{equation}
\label{eqn:psiscore}
\cost_k(\X, \psi) = \sum_{\alpha \in Q(\X(\psi))} \cost_k(\X(\psi),\alpha),
\end{equation}
where $\X(\psi)$ denotes the matrix $\X$
with permutation $\psi$ applied to its columns.
Note that $\cost_k(\X, \psi)$ can be interpreted as the
``area'' below the curve of $\cost_k(\X(\psi), \alpha)$
in the accuracy-cost plane
for $\alpha \in Q(\X)$.
For example, if the curve corresponding to permutation
$\psi$ is below the curve corresponding to $\psi' \neq \psi$,
we know that independent of $\alpha$,
the schedule $\psi$ always has a smaller cost for
the same value of $\acc_k(\X(\psi), \alpha)$.
The score in Equation~\ref{eqn:psiscore}
is a heuristic that attempts to capture this intuition.
The scheduling problem can thus be expressed as follows:
Given an integer $k$ and the matrix $\X$, find the schedule
\[
\psi^* = \arg\min_\psi \{ \cost_k(\X, \psi) \},
\]
where $\cost(\X, \psi)$ is defined as 
in Equation~\ref{eqn:psiscore}.
In this paper we propose
a simple greedy heuristic
for learning a good schedule.
Denote by a \textit{partial schedule}
a prefix of a full schedule.
The algorithm works by
adding a new attribute to
an already existing
partial schedule.
The attribute that is added
is the best one
among all possible alternatives.

Since we're dealing with partial schedules
that are prefixes of a full schedule,
we can not evaluate $\cost_k(\X, \psi)$
exactly as defined above.
This is because some rows are not
pruned by looking only at their prefix.
However, they may be pruned at some later stage
given a longer prefix.
When evaluating a partial schedule,
we assume that any row that is not pruned %%by the prefix
incur the full cost.
That is,
we must read all of their attributes
before knowing
whether or not they belong to the top-$k$ set.
This means that the cost of a prefix of the final schedule
is an upper bound for the cost of the full schedule.
More formally,
we denote the upper bound by
$\lceil\cost(\X,\psi)\rceil$,
and let
\[
\lceil\cost(\X,\psi)\rceil = \sum_{\alpha \in Q(\X(\psi))} \sum_{i=1}^n \cost(\X_{i \cdot}, \psi, \alpha),
\]
where the row-specific cost is
\begin{equation}
\label{eqn:rowub}
\cost(\x, \psi, \alpha) = \left\{
      \begin{array}{cl}
      \sum_{j=1}^m C(A_j) & \mbox{if } I(\x, \psi, \alpha) = \emptyset,\\
      \sum_{j=1}^{I(\x, \psi, \alpha)} C(A_{\psi(j)})& \mbox{otherwise}.
\end{array}\right.
\end{equation}
Above $I(\x, \psi, \alpha)$ is
the index of the first attribute (according to $\psi$)
that will prune the remaining attributes of $\x$, that is,
\[
I(\x, \psi, \alpha) = \min \{ h \; | \; \Pr( \x\w^T > \delta \; | \; \x_{\psi(1:h)}\w_{\psi(1:h)}^T) < \alpha\}.
\]
For convenience we let $\min \{ \emptyset \}  = \emptyset$.
Equation~\ref{eqn:rowub} simply states that
the cost of a row is the sum of all attribute costs
if the row is not pruned,
otherwise we only pay for the attributes
that are required to prune the row.
\begin{algorithm}[t]
\caption{}
\label{alg:scheduler}
Input: the matrix $\X$, the set of attributes $\mathcal{A}$\\
Output: a permutation $\psi$ of $\mathcal{A}$
\begin{algorithmic}[1]
\STATE $\psi \leftarrow []$
\WHILE{$\mathcal{A} \neq \emptyset$}
  \STATE $A' \leftarrow \arg\min_{A \in \mathcal{A}} \lceil\cost([\psi A], \X)\rceil$
  \STATE $\psi \leftarrow [\psi A']$
  \STATE $\mathcal{A} \leftarrow \mathcal{A} \setminus A'$
\ENDWHILE
\STATE \textbf{return} $\psi$
\end{algorithmic}
\end{algorithm}
The scheduling algorithm, shown in Algorithm~\ref{alg:scheduler},
always appends the attribute to the prefix
that minimizes the upper bound $\lceil\cost(\X,\psi)\rceil$.
We denote by $[\psi A]$ the permutation $\psi$ appended with $A$.

\section{Experiments}
\label{section:experiments}

In the experiments that follow
we compare the performance of our proposed method
with the baseline algorithms
using different schedules.
Our basic criteria for evaluation
are the cost and accuracy measures.
We with to remind the reader
that our notion of accuracy
is not a measure of relevance,
but simply a comparison with the
exact top-$k$ set.
%% We wish to remind the reader that
%% our notion of accuracy is based on a
%% comparison with the exact top-$k$ set.
%% That is, it should not be seen as a
%% measure of relevance of the results to a query,
%% but simply as a similarity to the
%% output of the trivial algorithm.
In addition to the baselines described earlier,
it is good to compare the numbers
with a sampling approach,
where we randomly select, say,
50 percent of the rows of the matrix,
and run the trivial algorithm on this.
This will have a cost of $0.5$,
and also the expected accuracy will be $0.5$.
Any reasonable algorithm should
outperform this.

The upper bounds for attribute values
used by the \ubalg and \mproalg algorithms
are based on training data as well.
The upper bound for attribute $A_j$
is the largest value of $A_j$
observed in the training data.
We acknowledge that this is
a rather rudimentary approach,
but we want to study how these
algorithms perform under the same conditions
as the \probabalg algorithm.
%% In addition to these,
%% we also in some cases report the
%% the number of
%% entries of $\X_{\cdot j}$
%% that were actually inspected by the algorithm.
%% Another similar quantity is
%% the number of rows where the algorithm
%% read precisely $j$ values.
In each of the tables that follow,
the numbers in parenthesis
denote the standard deviation of
the corresponding quantity.

\subsection{Datasets}

We conduct experiments on both artificial and real data.
Random data is generated by
sampling each $\X_{ij}$ from
a normal distribution with zero mean
and a unit variance.
To enforce that $\X_{ij} \in \mathbb{R}_0^+$
we replace each entry with its absolute value.
In every experiment we use one random $\X$ as the training data,
and another random $\X$ as the test data.
The results are averages over a number of such training-testing pairs.
Also, the vector $\w$ and the costs $C(A_j)$
are chosen uniformly at random from the interval $[0,1]$.

The real data consists of a set of queries
from a context sensitive Wikipedia search engine \cite{ukkonen08cikm}.
For each query $q$ we have the matrix $\X_q$
where each row corresponds to a document that contains the query term(s).
The documents are represented by 7 features.
We split the data randomly to a training and test part.
The training data consists of 25 matrices,
each corresponding to a set of documents matching a different query.
The test data consists of 100 matrices,
each again corresponding to a different query.
(There is no overlap between
queries in the training data and the test set.)
The training part is used to learn the weight vector $\w$
as described in \cite{ukkonen08cikm}.
Also the algorithms for finding a good schedule
and optimizing the value of $\alpha$ are run on the training data.

The attribute costs $C(A_j)$
for the Wikipedia example
were measured
by computing features for 400 queries.
For each query the result set is restricted to
1000 topmost documents according to one of the features (\textsc{bm25}).
In every case we measure the time spent
computing each feature.
The costs shown in Table~\ref{tbl:wcscosts}
are logarithms of the averages of these.
The numbers are not intended to be fully realistic,
% as the measurements were carried out using
% an unoptimized implementation
% running on a regular desktop computer,
%(2.4Ghz CPU, 4MB L2, 2GB RAM).
but we consider them reasonable for the purposes of this paper.

\begin{table}
\centering
\caption{Estimated attribute costs and their scoring weights for the Wikipedia data.}
\label{tbl:wcscosts}
%% \begin{tabular}{l|rr}
%%      & $C(A_j)$ & $\w_j$ \\
%% \hline
%% BM25 & 1.43     & 0.474  \\
%% GPR  & 2.23     & 0.003  \\
%% TXTSIM & 10.02  & 0.636  \\
%% SUCC & 5.49     & 0.479  \\
%% PRED & 4.06     & 0.353  \\
%% SPECT & 5.42    & 0.008  \\
%% LPR & 1.72      & 0.588  
%% \end{tabular}
\scriptsize
\begin{tabular}{l|rrrrrrr}
         & BM25 & GPR  & TXT   & SUCC & PRED & SPCT & LPR \\
\hline
$C(A_j)$ & 1.43 & 2.23 & 10.02 & 5.49 & 4.06 & 5.42 & 1.72\\
$\w_j$   & 0.047 & 0.003 & 0.636 & 0.479 & 0.353 & 0.008 & 0.588
\end{tabular}
\end{table}

\subsection{Schedule comparison}
\label{sect:exp_schedule}

First we compare the different schedule selection heuristics.
With \mproalg and \ubalg we only use the baseline schedules
A, B, C, and D.
In case of the \probabalg algorithm we also study how
a schedule learned using
the method described in Section~\ref{section:schedule}
compares to the baselines.
With the \probabalg algorithm
we use the method described in Section~\ref{section:alpha}
to learn a good value of $\alpha$.
We also study the effect of the
heuristic described in Section~\ref{sect:alg_ub}.
That is, do we gain anything
by reordering the rows of $\X$
in decreasing order of the value of the
first attribute in the schedule
before running the algorithms.
Note that this affects only
the \ubalg and \probabalg algorithms.
The \mproalg algorithm has this heuristic
built-in as the next element of $\X$ it reads
is selected from a priority queue
that is initialized with 
the upper bounds based on only the first feature.

Upper part of Table~\ref{tbl:schedule_randomdata_cost} shows
the average cost for each algorithm and schedule
for $k=10$ over 50 random inputs
when the row reordering heuristic is in use.
As can be seen, the \probabalg algorithm
outperforms both \ubalg and \mproalg by a clear margin
independent of the choice of the schedule.
When comparing the schedules,
both D (the weight-cost ratio heuristic)
and a learned schedule outperform the others.
The difference between D and a learned schedule is very small.
The bottom part of Table~\ref{tbl:schedule_randomdata_cost}
shows the same quantities for the
\ubalg and \probabalg algorithms
when the rows of the input matrix are not sorted
in decreasing order of the value on the first attribute
(according to the used schedule).
Clearly both algorithms perform considerably
worse in this case.
Hence, with random data the row reordering heuristic is useful.

The accuracies for random data are shown in
Table~\ref{tbl:schedule_randomdata_accuracy}.
Also here the upper and lower parts of the table
show results with and without the row reordering heuristic, respectively.
In general there is a correlation between cost and accuracy;
the more entries of the matrix you inspect,
the more accurate are the results.
In terms of accuracy the \ubalg algorithm
gives the best results,
with nearly 100 percent accuracy in almost every case.
The \probabalg algorithm has an average accuracy of 0.85
with schedule D,
which is a very good result considering that the algorithm
inspected on average only 23 percent of the entries of $\X$.
With the learned heuristic accuracy drops to 0.81, however.

\begin{table}
\centering
\caption{Costs for different schedules using random data ($k=10$) with (top) and without (bottom) the row reordering heuristic.}
\label{tbl:schedule_randomdata_cost}
\scriptsize
\begin{tabular}{l|r@{\hspace{2mm}}r@{\hspace{2mm}}r@{\hspace{2mm}}r@{\hspace{2mm}}r}
           & A           & B           & C           & D           & learned\\
\hline
\ubalg     & 0.83 (0.12) & 0.85 (0.09) & 0.88 (0.09) & 0.88 (0.07) & -      \\
\mproalg   & 0.69 (0.14) & 0.69 (0.15) & 0.60 (0.14) & 0.66 (0.13) & -      \\
\probabalg & 0.44 (0.17) & 0.25 (0.10) & 0.31 (0.15) & 0.23 (0.10) & 0.22 (0.10)\\
\hline
\hline
\ubalg     & 0.86 (0.10) & 0.89 (0.08) & 0.91 (0.07) & 0.91 (0.06) & - \\
\probabalg & 0.56 (0.14) & 0.37 (0.10) & 0.43 (0.12) & 0.35 (0.09) & 0.34 (0.09)
\end{tabular}

\end{table}

\begin{table}
\centering
\caption{Accuracies for different schedules using random data ($k=10$) with (top) and without (bottom) the row reordering heuristic.}
\label{tbl:schedule_randomdata_accuracy}
\scriptsize
\begin{tabular}{l|r@{\hspace{2mm}}r@{\hspace{2mm}}r@{\hspace{2mm}}r@{\hspace{2mm}}r}
           & A           & B           & C           & D           & learned\\
\hline
\ubalg     & 0.87 (0.21) & 0.99 (0.01) & 0.99 (0.01) & 1.00 (0.00) & - \\
\mproalg   & 0.55 (0.21) & 0.88 (0.10) & 0.62 (0.20) & 0.88 (0.11) & - \\
\probabalg & 0.84 (0.14) & 0.84 (0.14) & 0.84 (0.14) & 0.85 (0.16) & 0.81 (0.14)\\
\hline
\hline
\ubalg     & 0.89 (0.19) & 0.99 (0.01) & 0.99 (0.01) & 1.00 (0.00) & - \\
\probabalg & 0.90 (0.09) & 0.91 (0.08) & 0.89 (0.10) & 0.90 (0.10) & 0.88 (0.10)
\end{tabular}

\end{table}

\begin{table}
\centering
\caption{Costs for different schedules using Wikipedia data ($k=10$) with (top) and without (bottom) the row reordering heuristic.}
\label{tbl:schedule_wikidata_cost}
\scriptsize
\begin{tabular}{l|r@{\hspace{2mm}}r@{\hspace{2mm}}r@{\hspace{2mm}}r@{\hspace{2mm}}r}
           & A           & B           & C           & D           & learned\\
\hline
\ubalg     & 1.00 (0.00) & 1.00 (0.00) & 1.00 (0.00) & 0.91 (0.05) & -\\
\mproalg   & 0.95 (0.00) & 0.81 (0.06) & 0.67 (0.00) & 0.82 (0.00) & - \\
\probabalg & 0.43 (0.27) & 0.43 (0.09) & 0.64 (0.28) & 0.43 (0.22) & 0.66 (0.32)\\
\hline
\hline
\ubalg     & 0.76 (0.20) & 0.99 (0.01) & 0.62 (0.23) & 0.99 (0.01) & -\\
\probabalg & 0.50 (0.24) & 0.50 (0.09) & 0.71 (0.24) & 0.50 (0.20) & 0.74 (0.25)
\end{tabular}

\end{table}

\begin{table}
\centering
\caption{Accuracies for different schedules using Wikipedia data ($k=10$) with (top) and without (bottom) the row reordering heuristic.}
\label{tbl:schedule_wikidata_accuracy}
\scriptsize
\begin{tabular}{l|r@{\hspace{2mm}}r@{\hspace{2mm}}r@{\hspace{2mm}}r@{\hspace{2mm}}r}
           & A           & B           & C           & D           & learned\\
\hline
\ubalg     & 1.00 (0.00) & 1.00 (0.00) & 1.00 (0.00) & 1.00 (0.00) & -\\
\mproalg   & 0.76 (0.20) & 0.99 (0.01) & 0.62 (0.23) & 0.99 (0.01) & -\\
\probabalg & 0.63 (0.30) & 0.81 (0.22) & 0.81 (0.25) & 0.83 (0.23) & 0.83 (0.23)\\
\hline
\hline
\ubalg     & 1.00 (0.00) & 1.00 (0.00) & 1.00 (0.00) & 1.00 (0.00) & -\\
\probabalg & 0.64 (0.29) & 0.84 (0.20) & 0.82 (0.25) & 0.83 (0.24) & 0.85 (0.21)
\end{tabular}

\end{table}

Cost and accuracy for the Wikipedia data are shown in Tables
\ref{tbl:schedule_wikidata_cost} and \ref{tbl:schedule_wikidata_accuracy},
respectively.
The numbers are averages over 100 different queries
that belong to the test set.
In terms of the cost the \probabalg algorithm
is again a clear winner.
The best schedules
for \probabalg
are A, B, and D,
with the learned schedule having problems.
When accuracy is considered,
we observe that schedule A
performs considerably worse than the others.
Overall the best choice is D
(order attributes in decreasing order of the ratio $\w_j/C(A_j)$), however.
With this schedule the \probabalg algorithm
attains a accuracy of $0.83$ and
pays only $43$ percent of the maximum cost.
As with random data,
the costs increase for \probabalg when
the row reordering heuristic is not used.
Interestingly \ubalg performs better
with schedules A and C
without row reordering.
In fact, with schedule C
the \ubalg algorithm
attains a rather nice result
by having a accuracy of $1.00$ with an
average cost of $0.62$.

\subsection{Sensitivity to the parameter $\alpha$}
\label{sect:alphatest}

We continue by
studying the sensitivity of the \probabalg algorithm
to the value of $\alpha$.
A method for selecting a good value of $\alpha$
was proposed
in Section~\ref{section:alpha}.
We compare this 
value, denoted $\alpha^*$,
with the values $2\alpha^*$ and $\frac 12 \alpha^*$.
In addition to the actual values of cost and accuracy,
we also show two other quantities,
denoted $g_\downarrow$ and $g_\uparrow$.
These indicate the ratio of the relative
change in accuracy to the relative change in the cost
when $\alpha$ is halved or doubled, respectively.
We let
$g_\downarrow = \frac{A_{\alpha^*/2}/A_{\alpha^*}}{C_{\alpha^*/2}/C_{\alpha^*}}$, and
$g_\uparrow = \frac{A_{2\alpha^*}/A_{\alpha^*}}{C_{2\alpha^*}/C_{\alpha^*}}$.
When $g_\downarrow < 1$
the relative increase in accuracy
is less than the relative increase in costs.
Respectively, when $g_\uparrow > 1$
the relative decrease in accuracy
is larger than the relative decrease in costs.
On the other hand,
when either $g_\downarrow > 1$ or $g_\uparrow < 1$
it would be more efficient to
use $\alpha^*/2$ or $2\alpha^*$ instead of $\alpha^*$.
\begin{table}
\centering
\caption{Cost (top) and accuracy (middle) in random data with the PR algorithm for different $\alpha$.}
\label{tbl:alpha_rnd_results}
\scriptsize
\begin{tabular}{l|r@{\hspace{2mm}}r@{\hspace{2mm}}r@{\hspace{2mm}}r@{\hspace{2mm}}r}
           & A           & B           & C           & D           & learned \\
\hline
$\alpha^*/2$ & 0.46 (0.19) & 0.26 (0.10) & 0.37 (0.13) & 0.27 (0.11) & 0.27 (0.11)\\
$\alpha^*$   & 0.40 (0.18) & 0.24 (0.09) & 0.32 (0.12) & 0.24 (0.10) & 0.23 (0.10)\\
$2 \alpha^*$ & 0.34 (0.16) & 0.22 (0.08) & 0.25 (0.10) & 0.20 (0.08) & 0.19 (0.08)\\
\hline
\hline
$\alpha^*/2$ & 0.87 (0.10) & 0.88 (0.11) & 0.90 (0.09) & 0.91 (0.09) & 0.88 (0.11)\\
$\alpha^*$   & 0.82 (0.14) & 0.84 (0.13) & 0.85 (0.11) & 0.86 (0.11) & 0.84 (0.13)\\
$2 \alpha^*$ & 0.74 (0.16) & 0.79 (0.15) & 0.75 (0.14) & 0.80 (0.14) & 0.75 (0.17)\\
\hline
\hline
$g_\downarrow$ & 0.92        & 0.97        & 0.91        & 0.94        & 0.89\\
$g_\uparrow$  & 1.06        & 1.02        & 1.13        & 1.11        & 1.08
\end{tabular}

\end{table}

\begin{table}
\centering
\caption{Cost (top) and accuracy (middle) in Wikipedia with the PR algorithm for different $\alpha$.}
\label{tbl:alpha_wiki_results}
\scriptsize
\begin{tabular}{l|r@{\hspace{2mm}}r@{\hspace{2mm}}r@{\hspace{2mm}}r@{\hspace{2mm}}r}
             & A           & B           & C           & D          & learned\\
\hline
$\alpha^*/2$ & 0.46 (0.30) & 0.47 (0.10) & 0.73 (0.24) & 0.55 (0.23) & 0.73 (0.28)\\
$\alpha^*$   & 0.30 (0.24) & 0.42 (0.08) & 0.60 (0.27) & 0.48 (0.22) & 0.63 (0.30)\\
$2 \alpha^*$ & 0.19 (0.01) & 0.37 (0.05) & 0.28 (0.29) & 0.35 (0.22) & 0.42 (0.29)\\
\hline
\hline
$\alpha^*/2$ & 0.81 (0.21) & 0.89 (0.17) & 0.88 (0.21) & 0.92 (0.18) & 0.92 (0.16)\\
$\alpha^*$   & 0.62 (0.27) & 0.82 (0.21) & 0.79 (0.24) & 0.88 (0.22) & 0.84 (0.21)\\
$2 \alpha^*$ & 0.37 (0.25) & 0.69 (0.23) & 0.46 (0.33) & 0.76 (0.27) & 0.64 (0.28)\\
\hline
\hline
$g_\downarrow$ & 0.85       & 0.97        & 0.92        & 0.91        & 0.95\\
$g_\uparrow$   & 0.94       & 0.96        & 1.25        & 1.18        & 1.14\\
\end{tabular}

\end{table}

Results for random data are shown in Table~\ref{tbl:alpha_rnd_results}.
Costs are shown in the top part of the table,
while accuracy is shown in the middle part.
As expected,
halving (doubling) the value of $\alpha^*$
increases (decreases) both cost and accuracy.
However, as indicated by $g_\downarrow$ and $g_\uparrow$,
the increase (decrease) in accuracy
is never large (small) enough
to warrant the corresponding increase (decrease) in the cost.
Table~\ref{tbl:alpha_wiki_results} shows the results for Wikipedia.
The behavior is the same as with random data,
with the exception that now
$g_\uparrow$ is below $1$ for schedules A and B,
indicating that in this case the relative
gain in decreased cost is larger than the
relative loss in decreased accuracy.
Indeed, using schedule B
(rank attributes in decreasing order of the absolute value of $\w_j$)
with $\alpha$ is set to $2 \alpha^*$
we obtain
an average cost of $0.37$ with an average accuracy of $0.69$,
which can still be considered a
reasonable performance.

\subsection{Correlated weights and costs}

This experiment is only ran using random data.
We want to study how the relationship of $\w$
and $C(A_j)$ affects the performance of the algorithms.
We are interested in the case where
the most important attributes according to $\w$,
i.e.~those with the highest absolute values,
also have the highest costs.
In this case the baseline schedules B and C
(see Section~\ref{section:schedule})
disagree as much as possible.
The experiment is ran in the same way as
the one in Section~\ref{sect:exp_schedule},
with the exception that we let $C(A_j) = \w_j$.
The row reordering heuristic is being used.
\begin{table}[t]
\centering
\caption{Costs (top) and accuracies (bottom) for the algorithms when $C(A_j) = \w_j$ for $k = 10$.}
\label{tbl:weight_cost_results}
\scriptsize
\begin{tabular}{l|r@{\hspace{2mm}}r@{\hspace{2mm}}r@{\hspace{2mm}}r@{\hspace{2mm}}r}
           &   A         &  B          & C           & D           & learned \\
\hline
\ubalg     & 0.86 (0.04) & 0.81 (0.03) & 0.91 (0.04) & 0.85 (0.04) & - \\
\mproalg   & 0.72 (0.07) & 0.78 (0.03) & 0.64 (0.05) & 0.72 (0.07) & - \\
\probabalg & 0.47 (0.10) & 0.37 (0.05) & 0.62 (0.12) & 0.46 (0.11) & 0.44 (0.10)\\
\hline
\hline
\ubalg     & 1.00 (0.00) & 1.00 (0.00) & 0.99 (0.03) & 0.99 (0.01) & - \\
\mproalg   & 0.69 (0.23) & 0.92 (0.10) & 0.36 (0.15) & 0.67 (0.23) & - \\
\probabalg & 0.82 (0.12) & 0.83 (0.13) & 0.80 (0.15) & 0.82 (0.14) & 0.78 (0.17)
\end{tabular}

\end{table}

Results are shown in Table~\ref{tbl:weight_cost_results}.
Costs are given in the upper part of the table,
while accuracies are shown in the lower part.
Clearly the \probabalg algorithm still outperforms both
baselines with every schedule.
However, when the numbers are compared with those in
tables \ref{tbl:schedule_randomdata_cost} and
\ref{tbl:schedule_randomdata_accuracy},
we observe a noticeable decrease in performance.
The average costs of C, D, and the learned schedule
are twice as high when
the most important features also have the highest costs.
But even now the average cost of a query is less than 50 percent
of the full cost with the \probabalg algorithm.

\subsection{Effect of k}

The performance of the algorithms may
also depend on the size of the top-$k$ set.
For smaller $k$ we expect the pruning
to be more efficient,
as the threshold $\delta$ is larger.
In addition to $k=10$ that was used in the previous experiments,
we also run the algorithms with $k=5$ and $k=20$
to see how this affects the results.
In this test we only consider the
weight-cost ratio heuristic (D) for the schedule.

\begin{table}
\centering
\caption{Costs (top) and accuracies (bottom) with random input, schedule D and different $k$.}
\label{tbl:ktest_rnd}
\begin{tabular}{l|rrr}
           & k = 5       & k = 10      & k = 20 \\
\hline
\ubalg     & 0.85 (0.08) & 0.88 (0.07) & 0.93 (0.05)\\
\mproalg   & 0.64 (0.17) & 0.66 (0.13) & 0.68 (0.12)\\
\probabalg & 0.19 (0.09) & 0.23 (0.10) & 0.29 (0.10)\\
\hline
\hline
\ubalg     & 1.00 (0.00) & 1.00 (0.00) & 1.00 (0.00)\\
\mproalg   & 0.87 (0.20) & 0.88 (0.11) & 0.89 (0.12)\\
\probabalg & 0.87 (0.14) & 0.85 (0.16) & 0.86 (0.09)
\end{tabular}

\end{table}
\begin{table}[t]
\centering
\caption{Costs (top) and accuracies (bottom) with Wikipedia, schedule D and different $k$.}
\label{tbl:ktest_wiki}
\begin{tabular}{l|rrr}
           & k = 5       & k = 10      & k = 20 \\
\hline
\ubalg     & 0.83 (0.00) & 0.91 (0.05) & 0.84 (0.00)\\
\mproalg   & 0.82 (0.00) & 0.82 (0.00) & 0.83 (0.00)\\
\probabalg & 0.41 (0.22) & 0.43 (0.22) & 0.43 (0.22)\\
\hline
\hline
\ubalg     & 0.98 (0.05) & 1.00 (0.00) & 0.99 (0.01)\\
\mproalg   & 0.99 (0.03) & 0.99 (0.01) & 1.00 (0.00)\\
\probabalg & 0.83 (0.28) & 0.83 (0.23) & 0.79 (0.25)
\end{tabular}

\end{table}

Table~\ref{tbl:ktest_rnd} shows results for
random data.
Clearly the cost increases as $k$ increases.
Especially for the \probabalg algorithm
the effect is considerable.
However, accuracy is not really affected
for any of the algorithms.
Results for Wikipedia are shown in Table~\ref{tbl:ktest_wiki}.
Here we do not see any significant effect
on either the cost or accuracy.

\section{Conclusion and future work}

We have discussed an algorithm for
approximate top-$k$ search
in a setting where
the input relation is initially hidden,
and its elements can be accessed only
by paying a (usually computational) cost.
The score of a row is defined
as its inner product with a scoring vector.
The basic task is to find an
approximate top-$k$ set
while keeping the total cost of the query low.
Although we consider linear scoring functions in this paper,
the proposed approach should yield itself also to
other types of of aggregation functions.

Since the contents of the relation are unknown
before any queries are issued,
indexing its contents is not possible.
This is a key property of our setting
that differentiates it from most of existing literature
on top-$k$ as well as $k$-NN search.
Instead we have access to 
training data from the same distribution as the hidden relation.
The algorithm we propose is based on the use of this training data.
Given the partial score of an item,
the algorithm estimates the probability
that the full score of the item will be high enough for the
item to enter the current top-$k$ set.
The estimator for this probability is learned from training data.
The algorithm has two parameters.
We also propose methods for
learning good values for these from training data.
The experiments indicate that our proposed algorithm
outperforms the baseline
in terms of the cost
by a considerable margin.
While the MPro \cite{mpro} algorithm
attains a very high accuracy it does this at a high cost.

The work presented in this paper
is mostly related to databases and
approximate nearest neighbor search.
However, we also want to point out
some connections to classification problems,
and especially feature selection.
Our approach can be seen as a form of
dynamic feature selection
for top-$k$ problems
with the aim to reduce the overall cost of the query.
Similarly we can consider
cost-sensitive classification
(see e.g.~\cite{davis06ecml}),
where the task is to classify
a given set of items
while keeping the total cost as low as possible.
Based on a subset of the available features
the classifier makes an initial prediction,
and if this prediction is
not certain enough,
we read the value of a yet unknown feature
and update the prediction accordingly.
Decision trees already implement this principle in a way,
but it might be interesting to
extend it to other classification algorithms,
%% especially ones where the prediction
%% is essentially based on inner products,
such as SVMs.

Conversely,
a potentially interesting approach to extending
the work of this paper
is to replace the linear schedule
with something more complex,
such as a decision tree.
In this case
the next attribute to be read
would depend on the value(s) of the previous attribute(s).
%% Especially with nonlinear ranking functions such
%% an approach might prove more efficient than
%% the one proposed here.
The results of \cite{charikar02,kaplan05lac,cicalese05,greiner06}
might provide a fruitful starting point
for studying the theoretical properties of the problem.
Further studies include the
use of more complex models than linear regression
for estimating $\mu$ and $\sigma$.
Also, using other distributions than a Gaussian
for the full score given a prefix score
may be of interest.

%% Clearly a performance issue
%% with the current approach is
%% that we have to
%% compute the probability of the tail of a gaussian.

\bibliographystyle{abbrv} \bibliography{topk09}

\end{document}